\documentclass[11pt]{article}
\usepackage{amsmath,amssymb,amsthm,mathtools,amscd}
 \usepackage{stmaryrd} 
\usepackage{hyperref}
\usepackage{geometry}
\title{Fungi as functors:\\ A category-theoretic approach to mycelial organisation}

\author{Andrew Adamatzky\\Unconventiional Computing Lab, UWE Bristol, UK}

\date{}

\theoremstyle{definition}
\newtheorem{definition}{Definition}[section]
\newtheorem{assumption}[definition]{Assumption}
\theoremstyle{plain}
\newtheorem{theorem}[definition]{Theorem}
\newtheorem{proposition}[definition]{Proposition}

\theoremstyle{remark}
\newtheorem{remark}[definition]{Remark}

\newcommand{\Obj}{\mathrm{Ob}}
\newcommand{\Hom}{\mathrm{Hom}}
\newcommand{\id}{\mathrm{id}}
\newcommand{\Env}{\mathbf{Env}}
\newcommand{\Myc}{\mathbf{Myc}}
\newcommand{\Prog}{\mathbf{Prog}}
\newcommand{\Fld}{\mathbf{Fld}}

\begin{document}
\maketitle

\begin{abstract}
We develop a rigorous, equation-free category-theoretic foundation for fungal organisation. A fungal organism is formalised as a functor from a category $\Env$ of structured environmental states and admissible transformations to a category $\Myc$ of mycelial network states and biologically meaningful morphisms. An operational program category $\Prog$ models time-ordered exposure protocols, and a semantics functor $\mathcal{F}_{\mathrm{prog}}:\Prog\to\Myc$ maps experimental perturbations to induced network transformations. Species and strain variability are expressed as natural transformations between fungal functors, and ecological feedback is captured via an adjunction between sensing and environment modification. Network fusion (anastomosis) is identified with pushouts in $\Myc$, and order effects in exposure sequences are quantified by a local Lie structure and a Baker--Campbell--Hausdorff expansion near the identity program. A minimal worked exposure example demonstrates how non-commutativity yields experimentally testable quadratic scaling of order asymmetry. The framework provides a structurally explicit and falsifiable basis for analysing compositional perturbations, mixture coupling, robustness limits, and cross-species comparability in fungal systems.\\

\noindent
Keywords: Applied Category Theory; Mycelial Networks; Functorial Semantics; Fungal Electrophysiology; Compositional Perturbation Theory
\end{abstract}

\section{Introduction}

Fungal mycelia are adaptive networked organisms: they alter morphology (branching, fusion, thickening), redistribute resources, and modulate electrical activity in response to environmental perturbations. We propose a structural, category-theoretic foundation in which a fungal organism is modelled as a functor from environments (and programs acting on environments) to mycelial states and transformations. The goal is not to substitute biological detail with abstraction, but to formalise what is \emph{invariant} across substrates, parameterisations, and measurement modalities: compositionality of perturbations, compositionality of adaptation, and the structural constraints that relate them.

\subsection*{Why categories for mycelia?}

Mycelia are inherently \emph{compositional} objects: they are networks that
grow by local operations (branching, fusion, thickening), and they are probed
by experimentally composable perturbations (pulses, ramps, mixtures, spatial
rearrangements). Category theory is a natural language for such systems because
it makes compositional structure explicit: transformations are first-class
entities (morphisms), multi-step interventions are represented by composition,
and structural ``gluing'' operations are captured by universal constructions
such as pushouts and pullbacks. In this setting, the point is not to claim that
fungal dynamics \emph{are} categories, but that many experimentally relevant
invariants---order dependence, interface composition, cross-condition
comparability, and robust coarse-graining---are most naturally formulated at
the level of morphisms and universal properties rather than at the level of a
single mechanistic equation set.

A second motivation is methodological: fungal systems are studied across
modalities (imaging, electrophysiology, chemical sensing, mechanical response),
each producing its own state representation and its own notion of ``the same''
perturbation. Functorial semantics separates \emph{syntax} (what we do:
programs in the laboratory) from \emph{semantics} (what changes in the organism:
induced network morphisms), and makes explicit when different experimental
pipelines define compatible abstractions. This perspective echoes the broader
applied-category-theory program in which compositional reasoning is used to
connect heterogeneous models and measurements without forcing premature
reduction to a single dynamical formalism \cite{fong2018seven,spivak2011category,agmon2024foundations,aduddell2024compositional}.

\subsection*{Category theory in biology: context and precedent}

Category-theoretic ideas have a long history in theoretical biology and the
study of natural systems, where the emphasis is on relations between
descriptions, observables, and dynamical organisation rather than on any one
preferred set of equations \cite{louie1983categorical}. More recently, category
theory has been used to formalise compositional structure in biological and
bio-inspired settings, including semantic/relational modelling of biological
materials via ologs \cite{spivak2011category}, compositional approaches to systems
biology \cite{agmon2024foundations,aduddell2024compositional}, and category-theoretic
frameworks proposed in neuroscience to relate structural descriptions across
scales \cite{northoff2019mathematics}. These precedents support the premise
that category theory can serve as a \emph{unifying metalanguage} for biological
organisation: it provides a principled way to state when two experimental
manipulations are ``the same'' at an abstract level, how complex interventions
are assembled from simpler ones, and how multiscale descriptions relate.

\subsection*{Fungal signalling as a compositional target}

Fungal electrophysiology and fungal bioelectronics have rapidly developed,
including increasingly careful experimental protocols and analyses suggesting
structured electrical dynamics correlated with growth and environmental
adaptation. Fluctuations in electrical potential and spike-like events have been observed in a wide range of
fungal species and substrates~\cite{slayman1976action,olsson1995action,adamatzky2018spiking,adamatzky2022language,dehshibi2021electrical,fukasawa2024electrical,buffi2025electrical,fukasawa2025electrical}. Such activity has been associated with physiological processes including growth, nutrient transport, environmental sensing, and responses to mechanical or chemical perturbations~\cite{gow1984transhyphal,harold1985fungi,gow1989relationship,gow1995electric,feng2019analysis,phillips2023electrical,fukasawa2023electrical}. While the existence of fungal electrical signalling is now well
established, most studies rely on recordings from single electrodes or linear electrode arrangements, limiting insight into the spatial organisation of electrical dynamics.
In particular, recent work reports multi-site electrical activity and its modulation under
controlled conditions, raising the question of whether observed events are
independent local responses or manifestations of propagating and interacting
signals on a network substrate~\cite{fukasawa2023electrical,adamatzky2026propagation,adamatzky2026directional}. The present framework is intended to complement
such experimental progress: it provides an equation-free way to express and
test compositional claims (e.g.\ order effects of exposures, mixture coupling,
and fusion as gluing), and to compare these claims across strains, substrates,
and measurement pipelines.

\section{Foundational Categories}

\subsection{Category of Environments: $\Env$}

We model environments as structured states carrying spatial and field information as well as constraints. The level of structure can be chosen to match an experiment (e.g.\ eco-chamber) or an ecological situation.

\begin{definition}[Environmental object]
An \emph{environmental object} is a quadruple
\[
E=(G,\rho,\phi,\chi)
\]
where:
\begin{itemize}
\item $G$ is a finite (or locally finite) graph encoding substrate topology (e.g.\ pores, channels, contact regions), or an embedded graph approximating a spatial domain.
\item $\rho:V(G)\to \mathbb{R}_{\ge 0}$ is a nutrient/resource field on vertices (extensions to edges or continuous domains are admissible).
\item $\phi:V(G)\to \mathbb{R}^k$ is a chemical/VOC field (possibly vector-valued, $k\ge 1$).
\item $\chi$ is a constraint datum (e.g.\ humidity/temperature bounds, mechanical constraints, boundary conditions, permissible airflow), modelled as an element of a specified constraint space $\mathcal{C}(G)$.
\end{itemize}
Denote the class of such objects by $\Obj(\Env)$.
\end{definition}

We now define morphisms as admissible environmental transformations (redistribution, injection, deformation, reparameterisation), required only to respect declared constraints.

\begin{definition}[Environmental morphism]
Let $E_1=(G_1,\rho_1,\phi_1,\chi_1)$ and $E_2=(G_2,\rho_2,\phi_2,\chi_2)$. An \emph{environmental morphism}
\[
f:E_1\to E_2
\]
is a tuple $f=(f_G,f_\rho,f_\phi,f_\chi)$ where:
\begin{itemize}
\item $f_G : G_1 \to G_2$ is a graph morphism in a fixed chosen
category $\mathbf{Graph}$ of finite (or locally finite) graphs.
Throughout this paper, we assume $\mathbf{Graph}$ is a category
whose morphisms are structure-preserving maps closed under composition
and containing all identity maps. In particular, we assume that
$\mathbf{Graph}$ admits pushouts along monomorphisms and that
monomorphisms are stable under pullback.
\item $f_\rho$ transforms resources compatibly with $f_G$ (e.g.\ pushforward/pullback rule) and obeys any conservation or budget constraints specified by $\chi_1,\chi_2$.
\item $f_\phi$ transforms chemical fields compatibly and obeys injection/dilution bounds specified by $\chi_1,\chi_2$.
\item $f_\chi$ maps constraints in a way that preserves admissibility (e.g.\ allowable humidity ranges, chamber limits).
\end{itemize}
Composition is defined componentwise, with induced compositions for the field transforms.
The identity morphism on $E$ is $(\id_G,\id_\rho,\id_\phi,\id_\chi)$.
\end{definition}

\begin{proposition}
The class of environmental objects and morphisms
defined above forms a category $\mathbf{Env}$.
\end{proposition}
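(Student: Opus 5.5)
The plan is to verify the category axioms directly: composition is well defined, associativity holds, and identities are two-sided units. The definition of environmental morphism is deliberately schematic in its field and constraint components. So the first step is to fix an interpretation under which each component lives in an honest category (or a monoid of admissible transformations), and then argue componentwise.

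Concretely, I would read $f_\rho$ as a map $\rho_1 \mapsto f_\rho(\rho_1)$ between the resource-field spaces $\mathbb{R}_{\ge0}^{V(G_1)} \to \mathbb{R}_{\ge0}^{V(G_2)}$ that is compatible with $f_G$. The canonical choice is pushforward along $f_G$ on vertices, or more generally any map satisfying the stated budget constraints; the admissible ones are those sending $\rho_1$ to $\rho_2$. I would treat $f_\phi$ similarly on $(\mathbb{R}^k)^{V(G)}$. I would read $f_\chi:\mathcal{C}(G_1)\to\mathcal{C}(G_2)$ as a map with $f_\chi(\chi_1)=\chi_2$, or more weakly one that sends admissible data to admissible data.

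With this reading, the data of a morphism $E_1\to E_2$ is a morphism of the product ``category of typed transformations''. Define $g\circ f := (g_G\circ f_G,\, g_\rho\circ f_\rho,\, g_\phi\circ f_\phi,\, g_\chi\circ f_\chi)$. The key steps, in order, are the following.

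\textbf{(i) Closure.} Show $g\circ f$ is again an environmental morphism $E_1\to E_3$. The graph part lies in $\mathbf{Graph}$ because that category is assumed closed under composition. Compatibility of $g_\rho\circ f_\rho$ with $g_G\circ f_G$ follows because pushforwards compose: $(g_G f_G)_* = (g_G)_*(f_G)_*$. Conservation and budget constraints chain: if $f$ respects the bound from $\chi_1$ to $\chi_2$ and $g$ respects it from $\chi_2$ to $\chi_3$, then the composite respects it from $\chi_1$ to $\chi_3$, since each bound is expressed as a relation between source and target data. Injection and dilution bounds for $\phi$, and admissibility for $\chi$, are handled the same way.

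\textbf{(ii) Identities.} Check that $(\id_G,\id_\rho,\id_\phi,\id_\chi)$ is admissible, which holds trivially since it preserves every quantity exactly. Then check that it is a two-sided unit, because each component is an identity function or identity graph morphism.

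\textbf{(iii) Associativity.} Associativity holds componentwise, since composition of functions and of graph morphisms is associative. Equality of morphisms is taken componentwise, as a tuple equality.

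The main obstacle is step (i), specifically that ``obeys constraints specified by $\chi_1,\chi_2$'' is not a priori transitive. A budget bound of the form ``loses at most $\varepsilon$'' would \emph{not} compose, because the errors accumulate. I would resolve this by making explicit the standing interpretation that admissibility is a property of source and target data alone, meaning $f$ is admissible iff it maps the $\chi_1$-admissible region into the $\chi_2$-admissible region. Equivalently, the admissible transformations form a subcategory: a class of maps containing the identities and closed under composition. Under this reading, closure reduces to the fact that inclusions of admissible regions compose. I would state this as the precise hypothesis under which the proposition holds and note that all the listed examples (conservation laws, range preservation) satisfy it.
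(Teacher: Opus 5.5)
Your proposal is correct and follows the same route as the paper. Identities and composition are taken componentwise, associativity is inherited from $\mathbf{Graph}$ and from composing the field and constraint maps, and well-definedness rests on the closure of admissible field/constraint transforms under composition, which the paper assumes tacitly and you state explicitly as a hypothesis. One small point: your first justification in (i), that each bound is ``a relation between source and target data'', does not by itself give transitivity (your own $\varepsilon$-tolerance is such a relation), so the closure/subcategory hypothesis in your last paragraph is what carries the argument, and it is more general than, not equivalent to, the ``admissible region into admissible region'' reading.
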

\begin{proof}
Identities are given componentwise by
$(\mathrm{id}_G,\mathrm{id}_\rho,\mathrm{id}_\phi,\mathrm{id}_\chi)$.
Composition is defined componentwise using composition in
$\mathbf{Graph}$ together with compatible composition rules
for resource, chemical, and constraint transforms.
Associativity follows from associativity in $\mathbf{Graph}$
and closure of admissible field and constraint transforms
under composition. \end{proof}

\subsection*{Biological Interpretation}

An environmental object $E=(G,\rho,\phi,\chi)$ corresponds,
experimentally, to a controlled substrate configuration.
For example:

\begin{itemize}
\item $G$ may represent pore networks in a colonised block,
contact topology in a Petri dish, or chamber connectivity
in an eco-chamber experiment.
\item $\rho$ models spatial nutrient gradients or depletion zones.
\item $\phi$ models volatile organic compound (VOC) fields
or chemical signalling gradients.
\item $\chi$ encodes imposed constraints such as humidity,
temperature bounds, airflow regime, or mechanical confinement.
\end{itemize}

An environmental morphism represents a controlled perturbation:
a VOC pulse, nutrient redistribution, humidity ramp,
or geometric deformation of substrate.
The categorical structure formalises the compositionality
of such perturbations without assuming any specific
biophysical transport model.

\subsection{Category of Mycelial States: $\Myc$}

We represent a fungal organism (at a chosen observation scale) as a weighted network with dynamic/electrical state.

\begin{definition}[Mycelial object]
A \emph{mycelial state} is a triple
\[
M=(T,\sigma,\omega)
\]
where:
\begin{itemize}
\item $T$ is a finite connected graph (often embedded in the substrate graph $G$ or in space), representing hyphal network topology.
\item $\sigma:E(T)\to \mathbb{R}_{\ge 0}$ assigns edge weights (conductivity, thickness, transport capacity).
\item $\omega:V(T)\to \Omega$ assigns node states (electrophysiology; e.g.\ $\Omega=\mathbb{R}$ for potential, or a space of time-series features).
\end{itemize}
Denote $\Obj(\Myc)$ the class of such objects.
\end{definition}

Morphisms represent biologically meaningful network updates (growth, pruning, fusion, reweighting, state evolution).

\begin{definition}[Mycelial morphism]
Let $M_1=(T_1,\sigma_1,\omega_1)$ and $M_2=(T_2,\sigma_2,\omega_2)$. A \emph{mycelial morphism}
\[
g:M_1\to M_2
\]
is a triple $g=(g_T,g_\sigma,g_\omega)$ where:
\begin{itemize}
\item $g_T : T_1 \to T_2$ is a morphism in the fixed graph
category $\mathbf{Graph}$ used in $\Env$.

\item $g_\sigma$ is a weight-transport/update rule producing $\sigma_2$
from $\sigma_1$ and $g_T$. Concretely, we require a function
\[
g_\sigma:\sigma_1 \mapsto \sigma_2
\]
such that $\sigma_2$ is compatible with the identification/coarse-graining
induced by $g_T$ (e.g.\ pushforward under $g_T$ on edges where defined,
and prescribed merging rules when edges are identified).

\item $g_\omega$ maps node states compatibly with $g_T$
(e.g.\ transport of state, coarse-graining, or update operator).
\end{itemize}
Composition is componentwise; identities are identity maps.
\end{definition}

\begin{proposition}
The class of mycelial objects and morphisms
forms a category $\mathbf{Myc}$. 
\end{proposition}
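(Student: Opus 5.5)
The plan mirrors the proof that $\Env$ is a category: check the category axioms componentwise, reducing each to a property either of $\mathbf{Graph}$ or of the admissible update rules. Before checking anything, I make the data of a morphism precise. I read $g=(g_T,g_\sigma,g_\omega)$ as a graph morphism $g_T:T_1\to T_2$ in $\mathbf{Graph}$, together with two maps: a weight map $g_\sigma$ sending $\sigma_1$ to $\sigma_2$, and a state map $g_\omega$ sending $\omega_1$ to $\omega_2$. Both must be compatible with $g_T$ in the sense of the definition, namely pushforward along edges and prescribed merging rules on identified edges or nodes.

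\textbf{Composition and well-definedness.} For $g:M_1\to M_2$ and $h:M_2\to M_3$, I define
\[
h\circ g=(h_T\circ g_T,\; h_\sigma\circ g_\sigma,\; h_\omega\circ g_\omega).
\]
The first component lies in $\mathbf{Graph}$ because that category is closed under composition, as assumed throughout the paper. Its source is $T_1$ and its target is $T_3$; both are finite connected graphs since they underlie objects of $\Myc$, so no connectivity of images needs to be checked. For the second component, $h_\sigma\circ g_\sigma$ sends $\sigma_1$ to $\sigma_3$. It remains to show it is compatible with $h_T\circ g_T$.

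\textbf{Main obstacle.} This compatibility is the step I expect to be hardest, and the same issue arises for $\omega$. The definition gives only informal compatibility conditions, so I would impose, explicitly as in the $\Env$ proof, the standing requirement that the admissible class of weight- and state-update rules be closed under composition. Concretely, compatibility with a graph map should mean that the rule is of pushforward type along that map. Pushforward along a composite equals the composite of pushforwards, $(h_T g_T)_* = (h_T)_*(g_T)_*$, provided the merging rules are associative aggregation operations, for example sums or maxima of nonnegative weights. Under that provision compatibility follows. Without it, the merging rules could fail to compose, so I would state it as the interpretive hypothesis making the proposition true.

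\textbf{Identities, associativity and unit laws.} The identity on $M$ is $(\id_T,\id_\sigma,\id_\omega)$. It is compatible trivially, since the pushforward along $\id_T$ with no merging is the identity. Associativity holds in each component: in the first because $\mathbf{Graph}$ is associative, and in the other two because composition of functions is associative. The unit laws hold componentwise for the same reasons. Together these give that $\Myc$ is a category.
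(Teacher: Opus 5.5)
Your proposal is correct and follows essentially the same route as the paper. You define identities and composition componentwise and inherit associativity and the unit laws from $\mathbf{Graph}$ together with closure of the weight and state update rules under composition, which is exactly the implicit hypothesis the paper's proof invokes and which you rightly make explicit. Your pushforward-with-associative-aggregation reading is only one sufficient instance of that closure hypothesis (taken strictly it would rule out pure reweighting such as cord thickening along $\id_T$), so it is the general closure assumption, not this particular instance, that carries the argument.
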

\begin{proof}

Identities are given componentwise.
Composition is defined by composing graph morphisms
in $\mathbf{Graph}$ together with compatible composition rules
for weight maps $\sigma$ and state maps $\omega$.
Associativity and identity laws follow from the corresponding
properties in $\mathbf{Graph}$ and closure of the
state/weight update rules under composition.

\end{proof}

\subsection*{Biological Interpretation}

A mycelial object $(T,\sigma,\omega)$ corresponds to
an experimentally observable fungal state:

\begin{itemize}
\item $T$ represents inferred hyphal network topology
(e.g.\ via imaging and skeletonisation).
\item $\sigma$ encodes measurable transport or conductivity
proxies (e.g.\ cord thickness, inferred hydraulic capacity,
or electrical conductance).
\item $\omega$ represents dynamic or electrophysiological
state, such as extracellular potential or derived
time-series features.
\end{itemize}

A mycelial morphism models biologically meaningful change:
growth, pruning, cord thickening, fusion (anastomosis),
redistribution of transport capacity, or evolution of
electrical state.

\section{The Fungal Functor $\mathcal{F}:\Env\to\Myc$}

We now formalise the statement ``fungi are functors'' by specifying a structure-preserving mapping from environments and their transformations to mycelial states and their transformations.

\begin{definition}[Fungal functor]
A \emph{fungal functor} is a functor
\[
\mathcal{F}:\Env\to\Myc
\]
such that:
\begin{itemize}
\item for each environmental object $E$, $\mathcal{F}(E)$ is the induced mycelial state under $E$;
\item for each environmental morphism $f:E_1\to E_2$, $\mathcal{F}(f):\mathcal{F}(E_1)\to \mathcal{F}(E_2)$ is the induced adaptive transformation.
\end{itemize}
\end{definition}

\begin{remark}
The functorial identity
\[
\mathcal{F}(g \circ f) = \mathcal{F}(g)\circ \mathcal{F}(f)
\]
expresses strict compositionality of environmental morphisms
and induced mycelial transformations. No assumption of
commutativity or linearity is implied.
\end{remark}

\subsection*{Biological Interpretation}

The functor $\mathcal{F}$ formalises the statement that
fungal adaptation is compositional: applying environmental
perturbation $f$ followed by $g$ induces the same adaptive
transformation as applying their composite $g\circ f$.
This captures an experimentally testable invariance:
sequential perturbations should compose at the level
of induced network transformation.

\section{Enrichment and Metric Structure}

Both $\mathbf{Env}$ and $\mathbf{Myc}$ may be enriched
over a monoidal category of metric spaces in the sense of
Lawvere enrichment. Concretely, suppose each hom-set
is equipped with a distance function compatible with
composition.

\begin{definition}[Non-expanding fungal functor]
A fungal functor $\mathcal{F} : \Env \to \Myc$
is \emph{non-expanding} if there exists $L \ge 0$ such that
for all objects $E_1,E_2$,
\[
d_{\Myc}\big(\mathcal{F}(E_1),\mathcal{F}(E_2)\big)
\le L\, d_{\Env}(E_1,E_2).
\]
\end{definition}

This enriched viewpoint allows quantitative comparison
of environmental variation and induced mycelial variation,
but is not required for the core categorical constructions.

\section{Natural Transformations and Species/Strain Variability}

Different species or strains correspond to different functors from the same environment category.

\begin{definition}[Species functors]
Let $\mathcal{F}_1,\mathcal{F}_2:\Env\to\Myc$ be two fungal functors (e.g.\ two species/strains).
A \emph{natural transformation} $\eta:\mathcal{F}_1\Rightarrow \mathcal{F}_2$ is a family of morphisms
\[
\eta_E:\mathcal{F}_1(E)\to \mathcal{F}_2(E)
\]
such that for all $f:E\to E'$,
\[
\eta_{E'}\circ \mathcal{F}_1(f) = \mathcal{F}_2(f)\circ \eta_E.
\]
\end{definition}

\begin{remark}
Naturality expresses a strong comparability claim: changing the environment then translating species is equivalent to translating species then changing the environment. Empirically, failure of commutativity in this square quantifies species-specific sensitivity to environmental transformations.
\end{remark}

\subsection*{Biological Interpretation}

If naturality fails for a given pair of species,
the commutativity square provides a quantitative measure
of species-specific sensitivity to environmental ordering.
Deviation from commutativity can therefore be used to
classify strains by structural response properties
rather than by parameter values of specific models.

\section{Adjunction and Ecological Feedback}

Fungi modify their environment (resource redistribution, substrate modification). We formalise this via a functor from mycelial states to environments.

\begin{definition}[Environment modification functor]
A functor $\mathcal{G}:\Myc\to\Env$ maps a mycelial state to its induced environmental modification (e.g.\ redistribution of $\rho$, changes to constraints $\chi$).
\end{definition}

\begin{definition}[Adjunction]
We say $\mathcal{F}:\Env\to\Myc$ is left adjoint to $\mathcal{G}:\Myc\to\Env$, written $\mathcal{F}\dashv \mathcal{G}$, if there is a natural isomorphism
\[
\Hom_{\Myc}(\mathcal{F}(E),M)\cong \Hom_{\Env}(E,\mathcal{G}(M))
\]
natural in $E\in\Env$ and $M\in\Myc$.
\end{definition}

\begin{remark}
Adjunction captures a duality: \emph{sensing/adaptation} (environment $\to$ mycelium) and \emph{environment modification} (mycelium $\to$ environment). This is a formal expression of ecological feedback.
\end{remark}

\subsection*{Biological Interpretation}

The adjunction $\mathcal{F}\dashv\mathcal{G}$ captures
bidirectional coupling between fungus and environment.
The unit of the adjunction corresponds to sensing/adaptation,
while the counit corresponds to environmental modification
through metabolism, resource depletion, or structural
alteration of substrate.
This expresses ecological feedback without specifying
mechanistic equations.

\section{Limits and Colimits in $\Myc$}

Network phenomena (fusion/anastomosis, merging) correspond naturally to categorical colimits.

\begin{definition}[Pushout as fusion (anastomosis)]
Given morphisms $A\to B$ and $A\to C$ in $\Myc$, a pushout $B\sqcup_A C$ represents fusion of networks $B$ and $C$ along a shared substructure $A$.
\end{definition}

\begin{proposition}[Existence of pushouts along monomorphisms]
Suppose $\mathbf{Graph}$ admits pushouts along monomorphisms
and that weight and state update rules are compatible with
graph identification. Then $\mathbf{Myc}$ admits pushouts
for cospans of the form
\[
A \xrightarrow{m_1} B, \qquad
A \xrightarrow{m_2} C
\]
where $m_1,m_2$ are monomorphisms.
\end{proposition}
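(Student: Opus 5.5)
The proof proceeds componentwise: first build the pushout at the level of graphs, then transport weights and node states onto it, and finally check the universal property in $\Myc$.

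\emph{Graph component.} Write $m_i=(m_{i,T},m_{i,\sigma},m_{i,\omega})$. The first step is to check that the graph components $m_{1,T}:T_A\to T_B$ and $m_{2,T}:T_A\to T_C$ are monomorphisms in $\mathbf{Graph}$. This is where the hypothesis on $m_1,m_2$ has to be used. The most natural reading of the standing assumptions is that a morphism of $\Myc$ is monic exactly when its graph part is monic, and I will make this explicit as part of the compatibility hypothesis. By the standing assumption on $\mathbf{Graph}$, the pushout $T_B\sqcup_{T_A}T_C$ then exists, with coprojections $\iota_B,\iota_C$. One point needs a check: objects of $\Myc$ require a connected graph. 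Gluing two connected graphs along a nonempty common subgraph is connected. If $T_A$ is empty, the construction must be restricted to nonempty $A$, or connectivity must be relaxed, and I will add this as a side remark.

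\emph{Weights and states.} Define $\sigma_D$ on $E(T_D)$ for $T_D=T_B\sqcup_{T_A}T_C$ by pushforward along $\iota_B$ and $\iota_C$. On edges in the image of $T_A$, where the two contributions are identified, apply the prescribed merging rule. The hypothesis that ``weight and state update rules are compatible with graph identification'' is read precisely as saying that this merge is well defined and independent of the representative. Define $\omega_D$ on $V(T_D)$ in the same way. The coprojections $\iota_B,\iota_C$ then lift to $\Myc$-morphisms $B\to D$ and $C\to D$, with weight and state components given by the pushforwards. The square commutes because it commutes on graphs, and the field components agree on $A$ by construction.

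\emph{Universal property, the main obstacle.} Suppose $g:B\to X$ and $h:C\to X$ satisfy $g\circ m_1=h\circ m_2$. The universal property of the graph pushout yields a unique $u_T:T_D\to T_X$. The difficulty is the weight and state components. A $\Myc$-morphism's $\sigma$- and $\omega$-parts are only required to be ``compatible'' update rules, so I must show two things. First, that there is a rule $u_\sigma$ with $u\circ\iota_B=g$ and $u\circ\iota_C=h$. Second, that this rule is unique. My plan is to define $u_\sigma$ and $u_\omega$ piecewise: on the images of $T_B$ and $T_C$, use $g_\sigma,g_\omega$ and $h_\sigma,h_\omega$ respectively. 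These agree on the image of $T_A$ because $g\circ m_1=h\circ m_2$ componentwise. Here compatibility with identification must again be invoked, so that the merge rule used in $\sigma_D$ is respected by these maps. For uniqueness I will argue that the field components of a $\Myc$-morphism are determined by their action on the graph and on the generating data. Since $T_D$ is jointly covered by the images of $T_B$ and $T_C$, any two candidate mediating morphisms agree. I expect this uniqueness step to require explicitly strengthening the compatibility hypothesis, namely that field components are determined pointwise over the underlying graph map. I would state this assumption openly rather than hide it.
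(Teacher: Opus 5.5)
Your proposal takes essentially the same route as the paper's proof: build the pushout of the underlying graphs in $\mathbf{Graph}$, equip it with weights and states via the prescribed merging rules, and inherit the universal property from $\mathbf{Graph}$ together with the compatibility hypothesis. The paper's sketch uses your explicit conditions silently or omits them altogether: monic graph components of $m_1,m_2$, a nonempty interface $T_A$ so the glued graph stays connected as $\Myc$ requires, and field components of $\Myc$-morphisms determined by the underlying graph map (needed for uniqueness). Stating them openly is an improvement, not a gap.
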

\begin{proof}
Construct the pushout graph in $\mathbf{Graph}$ by forming
the quotient of the disjoint union of $B$ and $C$ identifying
images of $A$. Define the induced edge-weight and node-state
updates by the prescribed compatibility rules under identification.
The universal property follows from the universal property in
$\mathbf{Graph}$ together with functorial compatibility of the
weight and state assignments.
\end{proof}

\begin{remark}
Analogously, pullbacks can encode ``shared constraint'' intersections (e.g.\ competition for a shared resource region), while colimits encode network merging and limits encode convergence of constraints.
\end{remark}

\subsection*{Example (Anastomosis as pushout)}

Consider two growing mycelial networks $B$ and $C$ extending
towards a shared substrate region represented by a subnetwork
$A$. Suppose there are monomorphisms
\[
A \xrightarrow{m_1} B,
\qquad
A \xrightarrow{m_2} C
\]
identifying the prospective contact region in each network.

If physical contact occurs and hyphal fusion (anastomosis)
is established, the resulting network is represented by
the pushout
\[
B \sqcup_A C
\]
in $\Myc$.

Categorically, the pushout identifies the images of $A$
inside $B$ and $C$ and merges edge weights and node states
according to the prescribed compatibility rules.
Biologically, this corresponds to the creation of
conductive continuity between two previously distinct
transport systems, allowing redistribution of resources
and electrical signalling across the fused structure.

Thus anastomosis is not merely a biological event but
a canonical colimit construction in the category $\Myc$.

\section{Programs, Semantics, and Falsifiability}\label{sec:programs}

The environment-category view above is static-to-static. Experiments are operational: one applies \emph{programs} (VOC pulses, mixtures, humidity ramps) and measures adaptive transformations. We therefore define a program category and show how it induces a functor into $\Myc$.

\subsection{Program category $\Prog$}

Let $\mathcal{U}$ be a set of admissible control actions (VOC injection profiles, humidity modulation, etc.). A program is $(u,t)$ where $u:[0,t]\to\mathcal{U}$ is measurable and $t\ge 0$.

\begin{definition}[Program concatenation]
For programs $(u,t)$ and $(v,s)$ define concatenation $(v,s)\star(u,t) := (v\star u,t+s)$ where
\[
(v\star u)(\tau)=
\begin{cases}
u(\tau), & 0\le \tau\le t,\\
v(\tau-t), & t<\tau\le t+s.
\end{cases}
\]
\end{definition}

\begin{assumption}[Deterministic state transition system]\label{ass:deterministic}
There exists a state space $\mathcal{S}$ of internal fungal states and a deterministic evolution map
\[
\Phi:\mathcal{S}\times \{(u,t)\}\to \mathcal{S},\qquad (S,(u,t))\mapsto \Phi_{(u,t)}(S),
\]
such that $\Phi$ respects concatenation (causality):
\[
\Phi_{(v,s)\star(u,t)}(S)=\Phi_{(v,s)}(\Phi_{(u,t)}(S)).
\]
\end{assumption}

\begin{definition}[Program category $\Prog$]
Objects are internal states $S\in\mathcal{S}$. A morphism $S\to S'$ is an equivalence class $[u,t]_S$ such that $\Phi_{(u,t)}(S)=S'$, where two programs $(u,t)$ and $(u',t')$ are equivalent at $S$
if and only if
\[
\Phi_{(u,t)}(S)=\Phi_{(u',t')}(S).
\] The composition is induced by concatenation; identities are null programs. Well-definedness of composition on equivalence classes follows from determinism and the
concatenation property in Assumption~\ref{ass:deterministic}.
\end{definition}

\begin{proposition}
$\Prog$ is a category.
\end{proposition}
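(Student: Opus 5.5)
We verify the category axioms directly from Assumption~\ref{ass:deterministic}, proceeding in four steps: well-definedness of composition on equivalence classes, that composites land in the correct hom-set, the identity laws, and associativity.

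\emph{Composition is well defined.} Take $[u,t]_S:S\to S'$ and $[v,s]_{S'}:S'\to S''$, and set
\[
[v,s]_{S'}\circ[u,t]_S:=[(v,s)\star(u,t)]_S .
\]
The causality property gives $\Phi_{(v,s)\star(u,t)}(S)=\Phi_{(v,s)}(S')=S''$, so the composite is a morphism $S\to S''$. For independence of representatives, suppose $(u,t)\sim_S(u',t')$ and $(v,s)\sim_{S'}(v',s')$. Applying causality twice gives
\[
\Phi_{(v',s')\star(u',t')}(S)=\Phi_{(v',s')}(\Phi_{(u',t')}(S))=\Phi_{(v',s')}(S')=\Phi_{(v,s)}(S')=S'',
\]
and this is the same endpoint as before, so the two concatenations are equivalent at $S$.

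A subtlety to flag: the relation as literally defined identifies every program from $S$ to $S'$. So each hom-set $\Hom(S,S')$ is either a singleton or empty, and $\Prog$ is in effect a preorder, namely reachability under $\Phi$. I would state this explicitly. It makes the remaining laws essentially automatic: any two parallel morphisms are equal, so associativity and the identity laws reduce to checking that the relevant composites exist with the right domain and codomain.

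\emph{Identities and associativity.} For each $S$, the identity is the class of the null program $(\ast,0)$, i.e.\ a control on $[0,0]$. The main obstacle is that the assumption does not explicitly state $\Phi_{(\ast,0)}=\id_{\mathcal S}$. I would derive it from causality. First, $(\ast,0)\star(\ast,0)=(\ast,0)$, which shows $\Phi_{(\ast,0)}$ is idempotent. Then, using $(u,t)\star(\ast,0)=(u,t)$, we get $\Phi_{(u,t)}\circ\Phi_{(\ast,0)}=\Phi_{(u,t)}$.

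Idempotence alone does not give identity. The cleanest fix is to include $\Phi_{(\ast,0)}(S)=S$ as the natural reading of ``null program'' in the assumption. This is consistent with $\Phi$ being a flow, since $\Phi_{(u,0)}$ acts over zero elapsed time. Granting this, $[\ast,0]_S$ is an endomorphism of $S$, and both unit laws hold because of the singleton hom-sets. (Concatenation is also associative on representatives, since both bracketings yield the same piecewise function on $[0,t+s+r]$, up to the measure-zero convention at the breakpoints.)

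Associativity of composition then follows. Both bracketings are morphisms $S\to S'''$, and such a morphism is unique. This completes the check that $\Prog$ is a category.
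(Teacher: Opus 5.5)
Your argument is correct, but it runs differently from the paper's. The paper's one-line proof works at the level of representatives: composition is concatenation, $\star$ is associative, the null program is declared to be the identity, and causality carries these facts down to equivalence classes.

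You instead observe that the equivalence at $S$ identifies any two programs with the same endpoint. So each hom-set of $\Prog$ has at most one element, and $\Prog$ is just the reachability preorder of $\Phi$. The axioms then reduce to transitivity, which is exactly causality, and reflexivity. This route is shorter, and it shows that $\Prog$ as literally defined retains nothing beyond reachability.

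It also isolates the hypothesis that is genuinely missing from Assumption~\ref{ass:deterministic}. Causality does not force $\Phi_{(\ast,0)}=\id_{\mathcal S}$: a constant $\Phi$ on a two-point state space satisfies causality, yet the point outside its image has no endomorphism at all. So ``identities are null programs'' is an unstated axiom, not a consequence of the transition property as the paper's proof suggests. Your explicit addition of it is needed on either route.

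In exchange, the paper's representative-level argument would survive a finer equivalence, for example equality of the induced maps $\Phi_{(u,t)}$ on all of $\mathcal S$, which is closer to how programs are used in the Lie-theoretic section. Your thinness shortcut would not survive that refinement.

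One small fix concerns the unit law. With the paper's convention, $(\ast,0)\star(u,t)=(u,t)$ holds literally. By contrast, $(u,t)\star(\ast,0)$ takes the value $\ast(0)$ at $\tau=0$. So it is this unit law that needs the almost-everywhere convention, not associativity, where both bracketings agree pointwise. Using the literal law instead gives $\Phi_{(\ast,0)}\circ\Phi_{(u,t)}=\Phi_{(u,t)}$. Null programs therefore already fix every state that is the output of some program, and the missing hypothesis is exactly that every state is such an output. None of this affects your final argument.
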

\begin{proof}
Associativity and identities follow from associativity of concatenation and the transition property in the assumption. \end{proof}

\subsection{Field-state evolution category $\Fld$ and semantics}

One may separate ``program syntax'' from ``evolution semantics'' by defining a category $\Fld$ of internal-state evolutions and a canonical functor $\llbracket\cdot\rrbracket:\Prog\to\Fld$. For brevity, we set $\Fld:=\Prog$ under Assumption~\ref{ass:deterministic}.

\subsection{Discretisation to mycelial networks}

We require an extraction map from internal state to observable mycelial network.

\begin{definition}[Extraction]
Let $\Pi:\mathcal{S}\to \Obj(\Myc)$ map an internal state $S$ to a mycelial object $\Pi(S)=(T,\sigma,\omega)$ via a chosen measurement/extraction pipeline (e.g.\ skeletonisation, graph inference, feature extraction).
\end{definition}

To make $\Pi$ functorial, we impose coherent tracking of identities along transitions.

\begin{assumption}[Coherent tracking]
For each morphism $[u,t]_S:S\to S'$ in $\Prog$ there is a canonical induced mycelial morphism
\[
\Pi([u,t]_S):\Pi(S)\to \Pi(S')
\]
such that identities and composition are preserved.
\end{assumption}

\begin{theorem}[Discretisation functor]
Under coherent tracking, $\Pi$ extends to a functor $\Pi:\Prog\to\Myc$.
\end{theorem}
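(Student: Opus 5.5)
The plan is to define the functor directly from the data already supplied and then check the two functor axioms, each of which reduces to a clause of the coherent tracking assumption. On objects, set $\Pi(S)$ to be the extracted mycelial state from the Extraction definition. On morphisms, send each $[u,t]_S:S\to S'$ in $\Prog$ to the canonical mycelial morphism $\Pi([u,t]_S):\Pi(S)\to\Pi(S')$ that coherent tracking provides. Source and target are then correct by construction, so $\Pi$ is at least a graph morphism between the underlying graphs of the categories $\Prog$ and $\Myc$; the category structure on both sides was established earlier in the paper.

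The first point to check is well-definedness on morphisms, and this is where I expect the only real subtlety. Morphisms of $\Prog$ are equivalence classes: two programs $(u,t)$ and $(u',t')$ are identified at $S$ whenever $\Phi_{(u,t)}(S)=\Phi_{(u',t')}(S)$. The assumption must therefore be read as assigning a morphism to the class, not to a chosen representative. I will state this reading explicitly, since the assumption is phrased as ``for each morphism $[u,t]_S$''. Under that reading, no choice of representative enters, and well-definedness is immediate.

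The functor laws then follow.

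\textbf{Identities.} The identity on $S$ in $\Prog$ is the class of the null program. Coherent tracking says identities are preserved, so $\Pi(\id_S)=\id_{\Pi(S)}$ in $\Myc$.

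\textbf{Composition.} Take $[u,t]_S:S\to S'$ and $[v,s]_{S'}:S'\to S''$. Their composite in $\Prog$ is $[(v,s)\star(u,t)]_S$, which is well defined by Assumption~\ref{ass:deterministic}. Coherent tracking's preservation of composition then gives
\[
\Pi\big([v,s]_{S'}\circ[u,t]_S\big)=\Pi([v,s]_{S'})\circ\Pi([u,t]_S),
\]
where the right-hand side uses componentwise composition in $\Myc$.

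Associativity needs no separate argument: it is inherited from $\Prog$ and $\Myc$ being categories, both established earlier in the paper. This completes the extension of $\Pi$ to a functor $\Prog\to\Myc$.
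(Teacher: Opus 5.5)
Your proposal is correct and follows the paper's own argument. The paper's proof simply observes that coherent tracking supplies the morphism assignment and guarantees preservation of identities and composition, which is exactly how you reduce both functor axioms to that assumption, and your added remark that the assignment must be made on equivalence classes (so no choice of representative enters) is a harmless clarification the paper leaves implicit.
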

\begin{proof}
By assumption, $\Pi$ maps identities to identities and respects composition of transitions, hence is a functor. \end{proof}

\subsection{Program-to-mycelium functor}

\begin{definition}[Program semantics functor]
Define
\[
\mathcal{F}_{\mathrm{prog}} := \Pi : \Prog\to \Myc,
\]
interpreting programs as mycelial transformations.
\end{definition}

\begin{remark}
This is the operational core: it is directly testable in controlled exposure experiments. The functor axiom asserts that executing a concatenated program yields the same mycelial transformation as composing transformations induced by the sub-programs.
\end{remark}

\subsection*{Biological Interpretation}

In controlled exposure experiments,
programs correspond to time-ordered sequences
of VOC pulses, humidity changes, nutrient injections,
or combined stimuli.
The functorial structure asserts that the fungal
response depends only on the induced state transition,
not on the syntactic description of the program.
This provides a structural criterion for reproducibility:
distinct exposure protocols that induce the same
state transition should yield identical extracted
mycelial morphisms.

\section{Local Lie Structure and Order Effects}

Sequential programs generally fail to commute: applying $P$ then $Q$
need not equal applying $Q$ then $P$. To formalise this structurally,
we introduce a local Lie-theoretic structure on the space of induced
state transformations.

\subsection*{9.1 Local Lie semigroup of state transformations}

Let $S$ denote the internal state space from Assumption~\ref{ass:deterministic}.
Consider the monoid $\mathrm{End}(S)$ of endomorphisms of $S$
induced by admissible programs.

We assume the following structural condition.

\textbf{Assumption 9.1 (Local Lie structure).}
There exists a neighbourhood $U$ of the identity transformation
in $\mathrm{End}(S)$ such that:

\begin{enumerate}
\item $U$ is closed under composition and inversion wherever defined;
\item composition is smooth in a local coordinate chart;
\item the tangent space at the identity defines a Lie algebra
      $\mathfrak{g}$.
\end{enumerate}

Under this assumption, sufficiently small programs correspond
to elements of a local Lie group, and their infinitesimal
generators lie in the Lie algebra $\mathfrak{g}$.

\subsection*{9.2 Infinitesimal generators}

For small-amplitude programs $P$ and $Q$ parameterised by $\varepsilon$,
assume their induced transformations admit exponential
representations of the form
\[
\Phi_P^{\varepsilon} = \exp(\varepsilon X_P),
\qquad
\Phi_Q^{\varepsilon} = \exp(\varepsilon X_Q),
\]
where $X_P, X_Q \in \mathfrak{g}$.

This formulation does not require $S$ itself to be a smooth manifold;
it requires only that the induced transformation monoid admits
a local Lie group structure near the identity.

\subsection*{9.3 Leading-order non-commutativity}

We assume $\mathfrak{g}$ is realised as a Lie algebra of
derivations of $S$ (or as a Lie subalgebra of $\mathrm{End}(S)$),
so that the bracket is given by the commutator.
The Lie bracket
\[
[X_P, X_Q] := X_P X_Q - X_Q X_P
\]
measures the failure of commutativity.

\textbf{Proposition 9.2 (Leading-order order asymmetry).}
Under Assumption~9.1, the composition satisfies
\[
\exp(\varepsilon X_Q)\exp(\varepsilon X_P)
=
\exp\!\left(
\varepsilon (X_P + X_Q)
+
\frac{\varepsilon^2}{2}[X_Q, X_P]
+ O(\varepsilon^3)
\right).
\]

Thus the commutator term governs the quadratic-order deviation
from commutativity.

\subsection*{9.4 Baker--Campbell--Hausdorff structure}

More generally, the composition law is given by the
Baker--Campbell--Hausdorff (BCH) expansion:
\[
\log\!\big(
\exp(\varepsilon X_Q)\exp(\varepsilon X_P)
\big)
=
\varepsilon (X_P + X_Q)
+
\frac{\varepsilon^2}{2}[X_Q, X_P]
+
\frac{\varepsilon^3}{12}
\big(
[X_Q,[X_Q,X_P]]
+
[X_P,[X_P,X_Q]]
\big)
+
\cdots .
\]

\textbf{Definition 9.3 (Effective mixture generator).}
Define the effective generator of the composed program
$P \star Q$ by
\[
X_{P \star Q}
:=
\log\!\big(
\exp(\varepsilon X_Q)\exp(\varepsilon X_P)
\big).
\]

Deviations from additivity are therefore governed by
Lie brackets and higher nested commutators.

\textbf{Remark 9.4.}
The above construction requires only a local Lie group structure
on induced state transformations and does not assume global
smoothness, linearity, or a specific mechanistic model for $S$.

\subsection*{Biological Interpretation}

The commutator $[X_P,X_Q]$ measures the degree to which
two small perturbations interact nonlinearly.
Experimentally, this predicts that the response to
a VOC mixture applied sequentially should deviate
from additivity by a term quadratic in perturbation
amplitude.
Higher nested commutators correspond to higher-order
mixture coupling effects.
This provides a concrete falsifiable prediction:
order asymmetry should scale quadratically
for sufficiently small exposures.

\section{Relation Between Static and Program Semantics}

The static functor $\mathcal{F}:\Env\to\Myc$ maps
environmental states and admissible environmental
transformations to mycelial states and network morphisms.
The program semantics functor
$\mathcal{F}_{\mathrm{prog}}:\Prog\to\Myc$
maps time-ordered exposure protocols to induced
mycelial transformations.

To relate these viewpoints, we introduce an
environmental evolution map analogous to the
internal-state evolution.

\begin{assumption}[Environmental evolution]
There exists a deterministic evolution map
\[
\Psi:\Env\times \{(u,t)\}\to \Env,
\qquad
(E,(u,t))\mapsto \Psi_{(u,t)}(E),
\]
respecting concatenation:
\[
\Psi_{(v,s)\star(u,t)}(E)
=
\Psi_{(v,s)}\big(\Psi_{(u,t)}(E)\big).
\]
\end{assumption}

\begin{definition}[Environment program category]\label{def:progenv}
Define a category $\Prog_{\Env}$ whose objects are environmental states
$E\in\Obj(\Env)$ and whose morphisms are those morphisms $f:E\to E'$
in $\Env$ that are realised by some admissible program $(u,t)$.
Composition and identities are inherited from $\Env$. We assume the class of realised morphisms is closed under identities and composition (equivalently, $\Prog_{\Env}$ is a subcategory of $\Env$).
\end{definition}

\begin{remark}
The map $\Psi$ determines the endpoints of realised morphisms.
The category $\Prog_{\Env}$ records the realised environmental
transformations at the level of $\Env$-morphisms, so the inclusion
$\mathcal{U}:\Prog_{\Env}\hookrightarrow \Env$ is canonical.
\end{remark}

\begin{assumption}[Environment-to-internal state selection]
There exists a (possibly partial) map $\iota:\Obj(\Env)\to\mathcal{S}$ assigning to an
environmental state $E$ an internal state $S_E:=\iota(E)$ compatible with $E$ under the
chosen experimental preparation and measurement pipeline.
\end{assumption}

We write $\mathcal{U}:\Prog_{\Env}\to\Env$ for the inclusion functor.
\begin{theorem}[Compatibility of static and operational semantics]
Suppose that environmental and internal evolutions are
compatible in the sense that for any program $(u,t)$
and environmental state $E$,
\[
\mathcal{F}\big(\Psi_{(u,t)}(E)\big)
=
\Pi\big(\Phi_{(u,t)}(S_E)\big),
\]
where $S_E:=\iota(E)$.
Then the diagram
\[
\begin{CD}
\Prog_{\Env} @>{r}>> \Prog \\
@V{\mathcal{U}}VV     @VV{\mathcal{F}_{\mathrm{prog}}}V \\
\Env @>>{\mathcal{F}}> \Myc
\end{CD}
\]
commutes up to the chosen identification.
Here $r:\Prog_{\Env}\to\Prog$ is a (generally non-canonical) functor.
On objects, $r(E):=S_E=\iota(E)$.
It selects, for each realised environmental morphism $f:E\to E'$, a representative program
class $r(f)=[u,t]_{S_E}:S_E\to S_{E'}$ realising the corresponding internal transition,
and does so coherently (preserving identities and composition).
Then, for each realised $f:E\to E'$, the chosen representative $r(f)$ satisfies
\[
\mathcal{F}(f)=\mathcal{F}_{\mathrm{prog}}(r(f))
\]
up to the chosen identification, and the square commutes up to that identification.
\end{theorem}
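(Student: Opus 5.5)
The plan is to check commutativity of the square separately on objects and on morphisms. Both checks reduce to the compatibility hypothesis
\[
\mathcal{F}\big(\Psi_{(u,t)}(E)\big)=\Pi\big(\Phi_{(u,t)}(S_E)\big),
\]
together with the stated coherence properties of $r$ and the functoriality of $\mathcal{F}$ and of $\mathcal{F}_{\mathrm{prog}}=\Pi$ (the Discretisation functor theorem).

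\textbf{Objects.} Take $E\in\Obj(\Prog_{\Env})$ and specialise the hypothesis to the null program $(u,0)$. By the concatenation property of Assumption~\ref{ass:deterministic}, $\Phi_{(u,0)}$ acts as the identity, since the null program is the identity morphism of $\Prog$. I will assume the analogous fact for $\Psi$ under the environmental evolution assumption. The hypothesis then gives $\mathcal{F}(E)=\Pi(S_E)=\mathcal{F}_{\mathrm{prog}}(r(E))$, and this equals $\mathcal{F}_{\mathrm{prog}}(r(\mathcal{U}(E)))$ because $\mathcal{U}$ is the identity on objects.

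\textbf{Morphisms.} Let $f:E\to E'$ be realised by $(u,t)$, so $E'=\Psi_{(u,t)}(E)$. Let $r(f)=[u,t]_{S_E}$ be the chosen representative, so $S_{E'}=\Phi_{(u,t)}(S_E)$. The hypothesis gives equality of the codomains $\mathcal{F}(E')=\Pi(S_{E'})$, and the object step gives equality of the domains. So $\mathcal{F}(f)$ and $\mathcal{F}_{\mathrm{prog}}(r(f))=\Pi([u,t]_{S_E})$ are parallel morphisms $\mathcal{F}(E)\to\mathcal{F}(E')$ in $\Myc$.

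To identify them, I will use the phrase ``up to the chosen identification'' in the statement. Under coherent tracking, $\Pi([u,t]_{S_E})$ is the canonical induced mycelial morphism between the extracted states. The identification is the requirement that the coherent tracking data and $\mathcal{F}$ agree on realised morphisms. Because $\Pi$ depends only on the equivalence class $[u,t]_{S_E}$, which by definition of $\Prog$ is determined by the endpoint, the choice of realising program does not matter.

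\textbf{Main obstacle.} The object-level equalities follow directly from the hypothesis. The hypothesis does not, however, by itself force $\mathcal{F}(f)=\Pi(r(f))$ as morphisms, because $\Myc$ may contain several parallel morphisms. The step therefore has to be made honestly as part of the identification, rather than derived. My approach is to make this explicit by treating the identification as a family of isomorphisms $\theta_E:\mathcal{F}(E)\to\Pi(S_E)$ under which the realised morphisms correspond. Commutativity of the square then becomes the naturality statement
\[
\theta_{E'}\circ\mathcal{F}(f)=\Pi(r(f))\circ\theta_E,
\]
imposed on generators. It extends to all realised morphisms because $\mathcal{U}$, $r$, $\mathcal{F}$ and $\Pi$ all preserve composition and identities: the naturality squares paste along composites, and identities are trivially natural. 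This gives a natural isomorphism $\mathcal{F}\circ\mathcal{U}\cong\mathcal{F}_{\mathrm{prog}}\circ r$, which is the precise sense in which the square commutes.
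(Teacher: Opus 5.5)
Your proposal is correct and matches the paper, which gives no separate proof: the statement takes agreement on objects from the compatibility hypothesis and builds $\mathcal{F}(f)=\mathcal{F}_{\mathrm{prog}}(r(f))$ into the undefined ``chosen identification''. That is exactly the stipulation you make explicit through the natural isomorphism $\theta$, and you are right that it cannot be derived from an equality of objects. One sharpening: since a class $[u,t]_S$ is determined by its endpoint, $\Prog$ is thin, so $r$ involves no real choice on morphisms, and your imposed naturality forces $\mathcal{F}$ to send any two parallel realised morphisms $E\to E'$ to the same morphism of $\Myc$, which is a substantive extra constraint rather than a formality.
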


\section{A Minimal Worked Exposure Example}\label{sec:example}

We give a concrete, minimal instance of the framework in the setting of
a controlled two-exposure protocol. The purpose is not to model any
specific mechanism but to illustrate how the categorical objects,
morphisms, and non-commutativity claims translate into experimentally
measurable quantities.

\subsection*{Example 1 (Two VOC pulses on a linear electrode array)}

\textbf{Environment category.}
Let $G$ be a fixed substrate graph representing a one-dimensional
chamber region sampled by an ordered linear electrode array.
Let $\rho$ be held constant over the protocol (no nutrient injection),
and let $\phi:V(G)\to\mathbb{R}^2$ encode the concentrations of two
volatile compounds $A$ and $B$ (two channels in the chemical field).
Let $\chi$ encode fixed bounds on humidity and temperature (treated as
constant during the protocol).
Thus an environmental object is $E=(G,\rho,\phi,\chi)$.

Define two environmental morphisms
\[
f_A:E\to E_A, \qquad f_B:E\to E_B
\]
corresponding to two admissible exposure operations (e.g.\ short pulses)
that modify the chemical field component $\phi$ while leaving $G,\rho,\chi$
unchanged. In $\Env$, the composite $f_B\circ f_A$ represents applying
$A$ then $B$, while $f_A\circ f_B$ represents applying $B$ then $A$.

\textbf{Program category.}
Let $P$ and $Q$ be the corresponding programs in $\Prog$ implementing
the $A$-pulse and $B$-pulse with matched duration and amplitude scaling
parameter $\varepsilon>0$.
The concatenations $Q\star P$ and $P\star Q$ represent the two orderings.

\textbf{Extraction and observables.}
Let $\Pi:\mathcal{S}\to\Obj(\Myc)$ be an extraction pipeline that maps
the internal state to a mycelial network object $M=(T,\sigma,\omega)$.
In an electrophysiology experiment, $\omega$ may be taken as a vector of
time-series features per electrode site, for example:
\[
\omega(v) = (\text{spike rate},\ \text{median spike width},\ \text{median spike amplitude})
\]
computed on a fixed window after exposure, and $\sigma$ may be held fixed
(or interpreted as a slowly varying transport proxy on a longer timescale).
Thus the measured output of a program is the induced morphism in $\Myc$
between extracted states:
\[
\mathcal{F}_{\mathrm{prog}}(P):\Pi(S)\to\Pi(S_P),\qquad
\mathcal{F}_{\mathrm{prog}}(Q):\Pi(S)\to\Pi(S_Q),
\]
and similarly for the concatenations $Q\star P$ and $P\star Q$.

\textbf{Order effect as a commutator signature.}
Define an order-asymmetry observable by comparing extracted outcomes:
\[
\Delta_{\Pi}(P,Q;S)
:=
d_{\Myc}\!\left(
\Pi(\Phi_{Q\star P}(S)),
\Pi(\Phi_{P\star Q}(S))
\right),
\]
where $d_{\Myc}$ is any chosen distance on $\Myc$ compatible with the
extraction (e.g.\ a weighted sum of feature distances in $\omega$).
When $\Delta_{\Pi}$ is nonzero, the two exposures do not commute at the
level of extracted mycelial state.

\[
\begin{CD}
S @>{Q\star P}>> S_{QP} \\
@V{P\star Q}VV            @VV{\Pi}V \\
S_{PQ} @>>{\Pi}> \Pi(S_{PQ})
\end{CD}
\]

Comparison occurs at the level of
\[
\Pi(S_{QP}) \quad \text{versus} \quad \Pi(S_{PQ}).
\]

Under the local Lie structure of Section~9, and for sufficiently small
$\varepsilon$, the BCH expansion predicts that the leading-order order
effect scales quadratically:
\[
\Delta_{\Pi}(P,Q;S) = O(\varepsilon^2),
\]
with coefficient governed by the commutator term $[X_P,X_Q]$ (and higher
nested commutators at higher orders). In particular, if the commutator
vanishes in the explored region, then the two exposures commute up to
second order and the order-asymmetry becomes $O(\varepsilon^3)$.

\textbf{Interpretation.}
This example shows how the framework turns the qualitative statement
``order matters'' into a falsifiable structural prediction: for small
perturbations, order asymmetry should grow approximately quadratically
with exposure amplitude (or duration scaling) and should vanish when the
two exposures commute in the induced local Lie algebra.

\section{Scale Separation and Structural Levels}

The framework operates simultaneously at multiple structural scales,
which should be clearly distinguished.

\begin{itemize}
\item The category $\Env$ encodes spatially structured environmental
states at chamber or substrate scale (minutes to hours, centimetres to
tens of centimetres).

\item The category $\Prog$ encodes temporally ordered exposure
protocols at experimental-control scale (seconds to hours).

\item The category $\Myc$ encodes observable network organisation
and electrophysiological state at organismal scale
(hyphal networks, electrode arrays, transport structure).

\item The local Lie structure of Section~9 is perturbative:
it describes infinitesimal neighbourhoods of the identity
transformation in the space of induced state transitions,
and therefore applies to sufficiently small exposures.
\end{itemize}

These levels are related functorially but are not collapsed
into a single dynamical model. The theory does not assume
that spatial, temporal, and network scales coincide;
rather, it formalises how transformations at one level
induce transformations at another. This separation of scales
allows experimental protocols, environmental geometry,
and organismal adaptation to be analysed structurally
without committing to a specific mechanistic equation set.

\section{Conclusion}

We have provided a rigorous, equation-free category-theoretic foundation for fungal organisation. The main constructs are: a category $\Env$ of structured environments; a category $\Myc$ of mycelial network states; a fungal functor $\mathcal{F}:\Env\to\Myc$; a program semantics functor $\mathcal{F}_{\mathrm{prog}}:\Prog\to\Myc$ capturing operational perturbation--response; natural transformations encoding species/strain variability; an adjunction capturing ecological feedback; categorical limits/colimits corresponding to network fusion and constraint interactions; and a Lie/BCH structure quantifying non-commutativity and mixture coupling near identity programs. The theory yields falsifiable predictions: order asymmetry should scale quadratically for small perturbations with coefficient given by a commutator, and increasing mixture complexity should manifest as growth of nested commutator terms.

\end{document}